\title{Fast Multi-Subset Transform and Weighted Sums Over Acyclic Digraphs} 
\titlerunning{Fast Multi-Subset Transform}
\author{Mikko Koivisto}{Department of Computer Science, University of Helsinki, Finland}{mikko.koivisto@helsinki.fi}{}{}
\author{Antti R\"{o}ysk\"{o}}{Department of Computer Science, University of Helsinki, Finland}{antti.roysko@helsinki.fi}{}{}
\authorrunning{M. Koivisto and A. R\"{o}ysk\"{o}}
\keywords{Bayesian networks, Moebius transform, Rectangular matrix multiplication, Subset convolution, Weighted counting of acyclic digraphs, Zeta transform}
\theoremstyle{plain}
\newtheorem{fact}[theorem]{Fact}
\newtheorem{conjecture}[theorem]{Conjecture}
\newcommand{\rmm}{\omega}
\newcommand{\cC}{{\mathscr{C}}}
\newcommand{\cK}{{\mathscr{K}}}
\newcommand{\cR}{{\mathscr{R}}}
\newcommand{\cS}{{\mathscr{S}}}
\newcommand{\cT}{{\mathscr{T}}}
\newcommand{\be}{\begin{eqnarray}}
\newcommand{\ee}{\end{eqnarray}}
\newcommand{\bes}{\begin{eqnarray*}}
\newcommand{\ees}{\end{eqnarray*}}
\long\def\comment#1{}
\renewcommand{\gets}{\leftarrow}
\begin{document}

\maketitle

\begin{abstract}
The zeta and Moebius transforms over the subset lattice of $n$ elements and the so-called subset convolution are examples of unary and binary operations on set functions. While their direct computation requires $O(3^n)$ arithmetic operations, less naive algorithms only use $2^n \mathrm{poly}(n)$ operations, nearly linear in the input size.  
Here, we investigate a related $n$-ary operation that takes $n$ set functions as input and maps them to a new set function. This operation, we call multi-subset transform, is the core ingredient in the known inclusion--exclusion recurrence for weighted sums over acyclic digraphs, which extends Robinson's recurrence for the number of labelled acyclic digraphs. Prior to this work, the best known complexity bound for computing the multi-subset transform was the direct $O(3^n)$. By reducing the task to rectangular matrix multiplication, we improve the complexity to $O(2.985^n)$. 
\end{abstract}

\section{Introduction}

In this paper, we consider the following problem. We are given a finite set $U$ and, for each element $i \in U$, a function $f_i$ from the subsets of $U$ to some ring $\cR$. The task is to compute the function $g$ given by 
\be \label{eq:mst}
    g(T) = \sum_{S \subseteq T} \prod_{i \in T} f_i(S)\,,\quad T \subseteq U\,.
\ee
We shall call $g$ the \emph{multi-subset transform} of $(f_i)_{i\in U}$. While the present study of this operation on set functions stems from a particular application to weighted counting of acyclic digraphs, which we will introduce later in this section, we believe the multi-subset transform could also have applications elsewhere.  

A straightforward computation of the multi-subset transform requires $\Omega(3^n)$ arithmetic operations (i.e., additions and multiplications in the ring $\cR$) when $U$ has $n$ elements. In the light of the input size $O(2^n n)$ and output size $O(2^n)$, one could hope for an algorithm that requires $2^n n^{O(1)}$ operations.  
Some support for optimism is provided by the close relation to two similar operations on set functions: the zeta transform of $f$ and the subset convolution of $f_1$ and $f_2$, given respectively by 
\bes 
	(f \zeta)(T) = \sum_{S \subseteq T} f(S)\quad \textrm{and} \quad
	(f_1 * f_2)(T) = \sum_{S \subseteq T} f_1(S) f_2(T\!\setminus\! S)\,,
	\quad T \subseteq U\,; 
\ees
these unary and binary operations can be performed using $O(2^n n)$ \cite{Yates1937,Kennes1992} and $O(2^n n^2)$ \cite{Bjorklund2007} arithmetic operations, thus significantly beating the naive $\Omega(3^n)$-computation. 
Indeed, consider the seemingly innocent replacement of ``$i \in T$'' by ``$i \in S$'' or ``$i \in U$'' in \eqref{eq:mst}: either one would yield a variant that immediately (and efficiently) reduces to the zeta transform. Likewise, replacing the factor $\prod_{i\in T\setminus S} f_i(S)$ in the product by $\prod_{i\in T\setminus S} f_i(T\!\setminus\! S)$ would give us an instance of subset convolution. The present authors do not see how to fix these ``broken reductions''---the multi-subset transform could be a substantially harder problem not admitting a nearly linear-time algorithm. One might even be tempted to hypothesize that one cannot reduce the base of the exponential complexity below the constant $3$. We refute this hypothesis: 

\begin{theorem}\label{thm:fmst}
The multi-subset transform can be computed using $O(2.985^n)$ arithmetic operations.
\end{theorem}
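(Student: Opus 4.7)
The plan is to reduce the computation of $g$ to a family of rectangular matrix multiplications of favourable shape. I partition $U$ into disjoint parts $A$ and $B$ with $|A| = k$, to be chosen later. Writing $T = T_A \cup T_B$ and $S = S_A \cup S_B$ via intersection with $A$ and $B$, the condition $S \subseteq T$ becomes $S_A \subseteq T_A$ and $S_B \subseteq T_B$, and the product $\prod_{i \in T} f_i(S)$ factors over $T_A$ and $T_B$. Splitting the inner sum,
\[
    g(T) \;=\; \sum_{S_A \subseteq T_A} h_{S_A}(T_A, T_B), \qquad h_{S_A}(T_A, T_B) \;=\; \sum_{S_B \subseteq T_B} \prod_{i \in T_A} f_i(S_A \cup S_B) \prod_{i \in T_B} f_i(S_A \cup S_B).
\]

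For each fixed $S_A$, the array $h_{S_A}$ is a matrix product. Defining $X_{S_A}[T_A, S_B] = \prod_{i \in T_A} f_i(S_A \cup S_B)$ for $T_A \supseteq S_A$ (a $2^{k-|S_A|} \times 2^{n-k}$ matrix) and $Y_{S_A}[S_B, T_B] = [S_B \subseteq T_B] \prod_{i \in T_B} f_i(S_A \cup S_B)$ (a $2^{n-k} \times 2^{n-k}$ matrix), we have $h_{S_A} = X_{S_A} Y_{S_A}$. Both $X_{S_A}$ and $Y_{S_A}$ can be built by incremental products in time proportional to the number of their entries, so all preprocessing fits in $O(6^{n/2}) = O(2.45^n)$ for $k = n/2$. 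Once the $h_{S_A}$ are known, the final aggregation $g(T) = \sum_{S_A \subseteq T_A} h_{S_A}(T_A, T_B)$ is a subset sum taking $O(3^k \cdot 2^{n-k})$ operations, which is also $O(2.45^n)$ at $k = n/2$.

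The dominant cost is that of the $2^k$ matrix multiplications, and the crux of the argument is that splitting by $S_A$ gives MMs of a much better aspect ratio than a ``global'' multiplication of a $2^k \times 2^n$ matrix by a $2^n \times 2^{n-k}$ matrix. In base $N = 2^{n-k}$, the $S_A$-th product has shape $N^{(k-|S_A|)/(n-k)} \times N$ by $N \times N$, so its cost is $N^{\omega((k-|S_A|)/(n-k),\,1,\,1) + o(1)}$, where $\omega(\cdot, 1, 1)$ is the rectangular matrix-multiplication exponent. Grouping by $s = |S_A|$, the total MM cost is
\[
    \sum_{s=0}^{k} \binom{k}{s} \cdot 2^{(n-k)\,\omega((k-s)/(n-k),\,1,\,1) + o(n)}.
\]
Setting $k = n/2$ reduces this to $\sum_s \binom{n/2}{s}\, 2^{(n/2)\,\omega(1-2s/n,1,1)}$. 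Plugging in the best known bounds for $\omega(\cdot, 1, 1)$ (e.g.\ those of Le Gall and Urrutia) and maximizing over $s$ (equivalently, over the aspect ratio) yields the claimed $O(2.985^n)$.

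The main obstacle is the numerical optimization: one must identify the maximizing $s$ -- the trade-off between the binomial factor $\binom{n/2}{s}$ (largest near $s = n/4$) and the MM exponent $\omega(1-2s/n,1,1)$, which decreases as the first matrix narrows -- and verify that the rectangular MM bound in the resulting regime is tight enough to drop the overall cost below $3^n$ with the stated constant. Any future improvement in $\omega(\cdot, 1, 1)$ would directly refine the base $2.985$.
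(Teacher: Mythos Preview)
Your decomposition is correct and takes a genuinely different route from the paper. Both halve $U$ and view $g$ as a matrix product, but the paper keeps a \emph{single} RMM instance and prunes it: columns with $|S|>\sigma n$ and rows with $|T_p|\le \tau h$ are handled by brute force, leaving one product of dimensions $R\times C\times R$ with $R\le b(\tau)^{n/2}$ and $C\le b(\sigma)^n$, to which the bound $\omega(1.75)\le 3.0216$ is applied; balancing at $\sigma\approx 0.382$, $\tau\approx 0.598$ yields $2.985^n$. You instead \emph{partition} the columns by the value of $S_A$, obtaining $2^{n/2}$ products of shape $N^{1-2s/n}\times N\times N$, with row-trimming on the $A$-side built in automatically (only $T_A\supseteq S_A$ survive). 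This is simpler and invokes $\omega(\beta,1,1)$ in the regime $\beta\le 1$ rather than $\beta\approx 1.75$. In fact, even the convexity bound on $\omega$ between the dual exponent $\alpha^*>0.313$ and $\omega(1)<2.373$ already gives $\max_{\beta\in[0,1]}\bigl(H(\beta)+\omega(\beta)\bigr)<3.13$, hence about $O(2.96^n)$; with the actual Le~Gall--Urrutia values one gets roughly $O(2.89^n)$, better than the paper's bound and close to the $O(2.930^n)$ the paper only \emph{conjectures} via a more elaborate covering-design construction.

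Two gaps remain in the write-up. The minor one: your $O(6^{n/2})$ preprocessing estimate undercounts the dense $Y_{S_A}$, whose total size is $2^{n/2}\cdot 4^{n/2}=2^{3n/2}\approx 2.83^n$; this is harmless, since each RMM already costs at least $N^2$ anyway. The substantive one, which you yourself flag: the theorem asserts a specific base, so a proof must actually locate the maximizing $s$, cite a concrete bound on $\omega(1-2s/n,1,1)$ there, and report the resulting constant. As written, this is a sound and arguably sharper outline than the paper's, but not yet a proof of the stated numerical inequality.
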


We obtain our result by a reduction to \emph{rectangular matrix multiplication (RMM)}. The basic idea is to split the ground set $U$ into two halves $U_1$ and $U_2$ and divide the product over $i \in T$ into two smaller products accordingly. In this way we can view \eqref{eq:mst} as a matrix product of dimensions $2^{|U_1|} \times 2^{|U|} \times 2^{|U_2|}$. The two rectangular matrices are sparse, with at most $6^{n/2} = O(2.4495^n)$ non-zero elements out of the total $8^{n/2}$. The challenge is to exploit the sparsity. Known algorithms for general sparse matrix multiplication \cite{Yuster2005,Kaplan2006} turn out to be insufficient for getting beyond the $O(3^n)$ bound (see Section~\ref{se:basic} for details). Fortunately, in our case the sparsity occurs in a special, structured form that enables better control of zero-entries, and thereby a more efficient reduction to dense RMM. To get the best available constant base in the exponential bound, we call upon the recently improved fast RMM algorithms \cite{Gall2018}.

\subsection{Application to weighted counting of acyclic digraphs}

Let $a_n$ be the number of labeled acyclic digraphs on $n$ nodes. Robinson \cite{Robinson1973} and Harary and Palmer \cite{Harary1973}, independently discovered the following inclusion--exclusion recurrence: 
\bes
	a_n = \sum_{s=1}^{n} (-1)^{s-1} {n \choose s} 2^{s(n-s)} a_{n-s}\,.
\ees
To see why the formula holds, view $s$ as the number of sinks (i.e., nodes with no out-neighbors), each of which can choose its in-neighbors freely form the remaining $n-s$ nodes.  

Tian and He \cite{Tian2009} generalized the recurrence to weighted counting of acyclic digraphs on a given set of $n$ nodes $V$. Now every acyclic digraph $D$ on $V$ is assigned a \emph{modular} weight, that is, a real-valued weight $w(D)$ that factorizes into node-wise weights $w_i(D_i)$, where $D_i \subseteq V\!\setminus\!\{i\}$ is the set of in-neighbors of node $i$ in $D$. This counting problem has applications particularly in Bayesian learning of Bayesian networks from data; the weighted count is the partition function of a statistical model that associates each node of the graph with a random variable, and  evaluating the partition function is the main computational bottleneck  \cite{Friedman2003,Tian2009,Talvitie2019}. Letting $a_V$ denote the weighted sum of acyclic digraphs on $V$, we have 
\be\label{eq:aV}
	a_V = \sum_{D} \prod_{i\in V} w_i(D_i) = \sum_{\emptyset \neq S \subseteq V} (-1)^{|S|-1} \Bigg(\prod_{i\in S} \sum_{D_i \subseteq V \setminus S} w_i(D_i)\Bigg)\, a_{V \setminus S}\,.
\ee
The recurrence enables computing $a_V$ using $O(3^{n} n)$ arithmetic operations \cite{Tian2009}.

We will apply Theorem~\ref{thm:fmst} to lower the base of the exponential bound: 

\begin{theorem}\label{thm:application}
The sum over acyclic digraphs with modular weights can be computed  using $O(2.985^n)$ arithmetic operations.
\end{theorem}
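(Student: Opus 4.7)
The plan is to recast the recurrence \eqref{eq:aV} as a layer-by-layer application of the multi-subset transform guaranteed by Theorem~\ref{thm:fmst}. I will first precompute, via standard zeta transforms in $O(2^n n^2)$ operations, the quantities $\varphi_i(R) = \sum_{D_i \subseteq R} w_i(D_i)$ for every $i \in V$ and every $R \subseteq V$. Writing \eqref{eq:aV} with $V$ replaced by a general nonempty $W \subseteq V$, substituting $R = W \setminus S$, and setting $b_R = (-1)^{|R|} a_R$ rewrites the recurrence as $\sum_{R \subseteq W} \prod_{i \in W \setminus R} \varphi_i(R) \, b_R = 0$ for every nonempty $W$, with $b_\emptyset = 1$. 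The remaining task is to compute $b_V$, from which $a_V = (-1)^n b_V$.

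The central step is to fit each summand $\prod_{i \in W \setminus R} \varphi_i(R) \, b_R$ into the MST template $\prod_{i \in T} f_i(S)$, whose product ranges over all of $T$ rather than $T \setminus S$. For a layer parameter $k$, I define $f_i^{(k)}(S) = \varphi_i(S)$ whenever $i \notin S$; for nonempty $S$ with $|S| \le k$ and $i \in S$, I set $f_i^{(k)}(S) = b_S$ when $i = \min S$ and $f_i^{(k)}(S) = 1$ otherwise; and I set $f_i^{(k)}(S) = 0$ whenever $|S| > k$. A direct check shows that for $S \subseteq T$ with $|S| \le k$ the product $\prod_{i \in T} f_i^{(k)}(S)$ equals $\prod_{i \in T \setminus S} \varphi_i(S) \cdot b_S$, since the factors for $i \in S$ contribute exactly $b_S$ via the $\min S$ slot, while for $|S| > k$ the product vanishes. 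Consequently, for $|W| = k+1$ the MST output $g^{(k)}(W)$ collapses to $\sum_{S \subsetneq W} \prod_{i \in W \setminus S} \varphi_i(S) \, b_S$, which by the recurrence equals $-b_W$.

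The algorithm then runs $n$ passes, $k = 0, 1, \ldots, n-1$: at pass $k$ I assemble $f_i^{(k)}$ using the $b_S$ already computed for $|S| \le k$, invoke Theorem~\ref{thm:fmst} to obtain $g^{(k)}$, and read off $b_W = -g^{(k)}(W)$ for every $W$ with $|W| = k+1$; finally I return $a_V = (-1)^n b_V$. Each pass spends $O(2^n n)$ operations on setup and readout plus $O(2.985^n)$ on the MST call. The main obstacle I expect is purely one of accounting: the naive sum over passes is $O(n \cdot 2.985^n)$ rather than $O(2.985^n)$. This is resolved by noting that the base $2.985$ in Theorem~\ref{thm:fmst} has polynomial slack---it rounds up a strictly smaller underlying RMM-derived exponent---so the factor $n$ is absorbed without altering the stated bound.
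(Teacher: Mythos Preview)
Your proof is correct and follows essentially the same strategy as the paper: precompute the zeta sums $\varphi_i$, then fill in the values $a_T$ (equivalently $b_T$) layer by layer in increasing $|T|$, each layer costing one call to the fast multi-subset transform of Theorem~\ref{thm:fmst}, with the extra factor $n$ absorbed into the slack of the base $2.985$. The only difference is the encoding trick used to smuggle the recursive factor into the product $\prod_{i\in T} f_i(S)$: the paper adjoins an auxiliary element $0$ to the ground set and lets $f_0(S)$ carry $(-1)^{|S|-1}a_{S\setminus\{0\}}$, whereas you keep the ground set equal to $V$ and let the slot $i=\min S$ carry $b_S$. Your variant is slightly more economical (ground set of size $n$ rather than $n+1$) but otherwise equivalent; note only that your case split for $f_i^{(k)}(S)$ is literally ambiguous when $i\notin S$ and $|S|>k$, though this never arises for the sets $W$ with $|W|=k+1$ that you actually read.
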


\subsection{Related work}

There are numerous previous applications of fast matrix multiplication algorithms to decision, optimization, and counting problems. Here we only mention a few that are most related to the present work. 

Williams \cite{Williams2005} employs fast \emph{square} matrix multiplication to count all variable assignments that satisfy a given number of constraints, each involving at most two variables. By a simple reduction, this yields the fastest known algorithm for the Max-2-CSP problem. The present work is based on the same idea of viewing the product of a group of low-arity functions as a large matrix; this general idea is also studied in the doctoral thesis of the first author \cite[Sects.~3.3 and 3.6]{Koivisto2004}, including reductions to \emph{RMM}, however, without concrete applications.   

Bj{\"{o}}rklund, Kaski, and Kowalik \cite{Bjorklund2017} apply fast RMM to show the following: Given a nonnegative integer $q$ and three mappings $f$, $g$, $h$ from the subsets of an $n$-element set to some ring, one can sum up the products $f(A)g(B)h(C)$ over all pairwise disjoint triplets of $q$-sets $A, B, C$ using $O\big(n^{3 q \tau + c}\big)$ ring operations, where $\tau < \tfrac{1}{2}$ and $c \geq 0$ are constants independent of $q$ and $n$. Consequently, one can count the occurrences of constant-size paths (or any other small-pathwidth patterns) faster than in the ``meet-in-the-middle time'' \cite{Bjorklund2017}. While the involvement of set functions and set relations bear a resemblance to those in multi-subset transform, the reduction of  Bj{\"{o}}rklund et al.\ is based on solving an appropriately  constructed system of linear equations, and is thus very different from the combinatorial approach taken in the present work. 

\section{Fast multi-subset transform: proof of Theorem~1}
\label{se:proof1}

We will develop an algorithm for multi-subset transform in several steps. In Section~\ref{se:basic} we give the basic reduction to RMM and the idea of splitting the sum over into several smaller sums. Then, in Section~\ref{se:first} we present a simple implementation of the splitting idea, and get our first below-$3$ algorithm. This algorithm is improved upon in Section~\ref{se:second}, yielding the claimed complexity bound. We end this section by presenting a more sophisticated splitting scheme in Section~\ref{se:third}. We have not succeeded to give a satisfactory analysis of its complexity. Yet, our numerical calculations suggest the bound $O(2.930^n)$.  

We will denote by $\rmm(k)$, for $k \geq 0$, the smallest value such that the product of an $N \times \lceil N^{k}\rceil$ matrix by an $\lceil N^k\rceil \times N$ can be computed using $O\big(N^{\rmm(k) + \epsilon}\big)$ arithmetic operations for any constant $\epsilon > 0$; for a formal definition of $\rmm(k)$, see Gall and Urrutia~\cite{Gall2018}. Thus, the exponent of square matrix multiplication is $\rmm := \rmm(1)$. 

We will make repeated use of the following facts about binomial coefficients:\begin{fact}\label{fact:binom}
For integers $k \geq 1$ and $n \geq 2k$ we have 
\bes
	(2n)^{-1/2} b\Big(\frac{k}{n}\Big)^n 
	\,\leq\, {n \choose k} 
	\,\leq\, \sum_{j=0}^k {n \choose j} 
	\,\leq\, b\Big(\frac{k}{n}\Big)^n = 2^{n H(k/n)}\,, 
\ees
where 
\bes 
	b(x) := x^{-x}(1-x)^{x-1}
	\quad \textrm{and} \quad
	H(x) :=	\log_2 b(x)
	\,,\qquad x \in [0, 1]\,. 
\ees
\end{fact}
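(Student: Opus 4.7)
The Fact bundles four claims, two of which are essentially definitional: the identity $b(k/n)^n = 2^{nH(k/n)}$ is just $H = \log_2 b$ exponentiated, and the middle inequality $\binom{n}{k} \leq \sum_{j=0}^k \binom{n}{j}$ holds termwise because binomial coefficients are nonnegative. The real content lies in the two outer inequalities.

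For the upper bound $\sum_{j=0}^k \binom{n}{j} \leq b(k/n)^n$, I would apply the classical binomial-theorem trick. Setting $p := k/n$, the hypothesis $n \geq 2k$ gives $p \leq 1/2$, hence $p^j(1-p)^{n-j} \geq p^k(1-p)^{n-k}$ for every $j \leq k$. Truncating the identity $1 = \sum_{j=0}^n \binom{n}{j} p^j(1-p)^{n-j}$ to the range $j \leq k$ and factoring out the smallest probability there yields $1 \geq p^k(1-p)^{n-k}\sum_{j=0}^k \binom{n}{j}$, which rearranges to $\sum_{j=0}^k \binom{n}{j} \leq p^{-k}(1-p)^{-(n-k)} = b(k/n)^n$, the last equality being the definition of $b$ unpacked at $x = k/n$.

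For the lower bound $(2n)^{-1/2} b(k/n)^n \leq \binom{n}{k}$, my plan is to plug Stirling's formula with explicit error---e.g., Robbins' bounds $\sqrt{2\pi m}(m/e)^m e^{1/(12m+1)} \leq m! \leq \sqrt{2\pi m}(m/e)^m e^{1/(12m)}$---into $\binom{n}{k} = n!/(k!(n-k)!)$. After cancellation the power factors regroup into $n^n/(k^k(n-k)^{n-k}) = b(k/n)^n$, the radical prefactors combine into $\sqrt{n/(2\pi k(n-k))}$, and what remains is an exponential correction $E := \exp(1/(12n+1) - 1/(12k) - 1/(12(n-k)))$. It then suffices to prove $E^2 \geq \pi k(n-k)/n^2$, and since $k(n-k) \leq n^2/4$ the right-hand side never exceeds $\pi/4 \approx 0.785$.

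The main obstacle is that the lower bound is \emph{tight} at $(n,k) = (2,1)$---both sides equal $2$---so no slack is available in the smallest case and Stirling alone falls short there. I would therefore handle $k = 1$ (the only source of extremes, since $n \geq 2k$ forces $n-k \geq k$) by hand: the inequality reduces to $(1 + 1/(n-1))^{n-1} \leq \sqrt{2n}$, checked directly for $n \in \{2, 3, 4\}$ and then observing $(1+1/(n-1))^{n-1} < e < \sqrt{10} \leq \sqrt{2n}$ for $n \geq 5$. On the remaining range $k \geq 2$ (so both $k, n-k \geq 2$), one has $E^2 > e^{-1/6} \approx 0.847 > \pi/4 \geq \pi k(n-k)/n^2$, and Stirling delivers the bound.
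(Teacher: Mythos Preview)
Your proof is correct and aligns with the paper's own treatment, which consists of the single sentence ``This can be proven using Stirling's approximation to factorials.'' You have supplied the details the paper omits: Stirling (via Robbins' explicit bounds) for the lower inequality, together with the necessary separate handling of $k=1$ where the bound is tight, and the standard binomial-probability argument for the upper inequality on the partial sum---a cleaner route there than Stirling would be, and entirely consistent with the paper's intent.
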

This can be proven using Stirling's approximation to factorials. 
\begin{fact}\label{fact:binommax}
Let $n$ be a positive integer. The function $k \mapsto {n \choose k} 2^k$ is increasing in $[0, \tfrac{2}{3}n)$ and strictly decreasing in $[\tfrac{2}{3}n, n)$. 
\end{fact}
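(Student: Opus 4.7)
\medskip\noindent\textbf{Proof plan for Fact \ref{fact:binommax}.} The natural approach is to study the ratio of consecutive values of $f(k) := \binom{n}{k} 2^k$. I would compute
\[
    \frac{f(k+1)}{f(k)} \;=\; \frac{\binom{n}{k+1}}{\binom{n}{k}}\cdot 2 \;=\; \frac{2(n-k)}{k+1}\,,
\]
valid for integers $0 \leq k \leq n-1$. This ratio is the only quantity that matters: $f$ is monotone on an integer interval exactly when this ratio stays on one side of $1$ throughout.

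Next I would determine the threshold where the ratio crosses $1$. The inequality $2(n-k)/(k+1) \geq 1$ is equivalent, after clearing the positive denominator, to $3k \leq 2n-1$, i.e.\ $k \leq (2n-1)/3$. Similarly $2(n-k)/(k+1) < 1$ is equivalent to $k > (2n-1)/3$, i.e., for integers, to $3k \geq 2n$, i.e., $k \geq \lceil 2n/3\rceil$.

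From this I would read off both claims. For the increasing part: if $k$ and $k+1$ both lie in $[0, \tfrac{2}{3}n)$, then $k+1 \leq \lceil 2n/3\rceil - 1 < 2n/3$, so $3k \leq 2n-3 \leq 2n-1$, and hence $f(k+1)/f(k) \geq 1$; iterating yields monotonicity on the integer points of $[0, \tfrac{2}{3}n)$. For the strictly decreasing part: if $k \in [\tfrac{2}{3}n, n-1]$, then $3k \geq 2n > 2n-1$, so $f(k+1)/f(k) < 1$, and again iteration gives strict decrease on $[\tfrac{2}{3}n, n)$.

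There is no real obstacle here: the only subtlety is a small case analysis on $n \bmod 3$ to verify that the threshold $(2n-1)/3$ from the ratio test is consistent with the stated boundary $\tfrac{2}{3}n$; this is handled uniformly by the observation that $2n/3 > (2n-1)/3$, so every integer $k \geq 2n/3$ satisfies $k > (2n-1)/3$, and every integer $k$ with $k+1 < 2n/3$ satisfies $k \leq (2n-1)/3$.
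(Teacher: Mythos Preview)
Your approach is correct and matches the paper's own argument exactly: both compute the ratio $\binom{n}{k+1}2^{k+1}\big/\binom{n}{k}2^{k} = 2(n-k)/(k+1)$, observe it is decreasing in $k$, and locate the crossover at $k = (2n-1)/3 = \tfrac{2}{3}n - \tfrac{1}{3}$. Your added integer bookkeeping around the boundary $\tfrac{2}{3}n$ is more detailed than the paper's one-line sketch, but the method is identical.
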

This can be proven by observing that the ratio $\binom{n}{k+1} 2^{k+1} / \binom{n}{k} 2^{k}$ equals $2 (n-k)/(k+1)$, and is thus decreasing in $k$, and is greater or equal to $1$ exactly when $k \leq \tfrac{2}{3}n - \tfrac{1}{3}$. 

\subsection{Basic reduction to rectangular matrix multiplication}
\label{se:basic}

Assume without loss of generality that $n$ is even. Let us arbitrarily partition $U$ into two disjoint sets $U_1$ and $U_2$, both of size $h := n/2$. If $T \subseteq U$, denote by $T_1$ and $T_2$ respectively the intersections $T\cap U_1$ and $T\cap U_2$. Furthermore, write $N := 2^h$ so that $2^n = N^2$. 

Armed with this notation, we write the multi-subset transform of set functions $(f_i)_{i\in U}$ as  
\be \label{eq:rmm}
    g(T) = G(T_1, T_2) := \sum_{S \subseteq U} F_1(T_1, S)\, F_2(T_2, S)\,, 
    \qquad T \subseteq U\,,
\ee
where we define  
\bes
    F_p(T_p, S) := [S \cap U_p \subseteq T_p] \prod_{i \in T_p} f_i(S)\,, 
    \qquad p = 1, 2\,.
\ees
Here the Iverson's bracket notation $[Q]$ evaluates to $1$ if $Q$ is true, and to $0$ otherwise. 

We can write the representation \eqref{eq:rmm} in terms of a matrix product as 
\bes
    G = F_1 F^{\top}_2\,,
\ees
where $G$ is an $N \times N$ matrix indexed in $2^{U_1} \times 2^{U_2}$ and $F_p$ is an $N \times N^2$ matrix indexed in $2^{U_p} \times 2^{U}$. As above, we will write the index pair in parentheses (not as subscripts).  

Applying fast RMM without any further tricks already yields a somewhat competitive asymptotic complexity bound. To see this, recall that $\rmm(k)$ denotes the exponent of RMM of dimensions $N \times \lceil N^{k}\rceil \times N$. Since $\rmm(2) < 3.252$ \cite{Gall2018}, we get that $G$, and thus $g$, can be computed using $O\big(N^{3.252}\big) = O(3.087^n)$ arithmetic operations. If the lower bound $\rmm(2) \geq 3$ was tight, we would achieve the bound $O(2.829^n)$.

So far, we have ignored the sparsity of the matrices $F_p$. An entry $F_p(T_p, S)$ is zero whenever the intersection $S_p = S \cap U_p$ is not contained in $T_p$. Thus, out of the $8^{n/2}$ entries of $F_p$, at most $3^h 2^h = 6^{n/2}$ are nonzero. In general, one can compute a matrix product of dimensions $r \times r^k \times r$ using $O\big(m r^{(\rmm-1)/2+\epsilon}\big)$ operations, provided that the matrices have at most $m \geq r^{(\rmm+1)/2}$ non-zero entries, irrespective of $k$ \cite{Kaplan2006}. This result applies to our case, but with the best known upper bound for $\rmm$ \cite{Gall2014}, it only yields a bound $O(3.108^n)$. A direct reduction to multiple multiplications of sparse square matrices \cite{Yuster2005} yields an even worse bound, $O(3.142^n)$ (calculations omitted). Output-sensitive sparse matrix multiplication algorithms  \cite{Amossen2009} will not work either, as our output matrix is dense in general.

Luckily, in our case, we can make more efficient use of the sparsity. We will decompose the matrix product into a sum of smaller matrix products,
as formulated by the following representation (the proof is trivial and omitted):  
\begin{lemma}\label{lem:split}
Let $\{\cS_1, \cS_2, \ldots, \cS_M\}$ be a set partition of $2^U$. Let $F_{pq}$ be the submatrix of $F_p$ obtained by removing all columns but those in $\cS_q$, for $p = 1, 2$ and $q = 1, 2, \ldots, M$. Then 
\bes
	G = \sum_{q=1}^M G_q\,,\quad \textrm{ where } G_q = F_{1q} F_{2q}^{\top}\,. 
\ees 
\end{lemma}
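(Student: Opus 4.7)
The plan is to verify the claimed identity entry-wise, using only the distributivity of matrix multiplication over its inner summation index together with the fact that $F_{pq}$ simply restricts $F_p$ to a subset of columns, without modifying any retained entry.

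First I would fix an arbitrary index pair $(T_1, T_2) \in 2^{U_1} \times 2^{U_2}$ and expand
\[
G_q(T_1, T_2) = (F_{1q} F_{2q}^{\top})(T_1, T_2) = \sum_{S \in \cS_q} F_{1q}(T_1, S)\, F_{2q}(T_2, S).
\]
Since $F_{pq}$ is obtained from $F_p$ by discarding the columns outside $\cS_q$ without altering any retained entry, we have $F_{pq}(T_p, S) = F_p(T_p, S)$ for every $S \in \cS_q$. Substituting this back gives $G_q(T_1, T_2) = \sum_{S \in \cS_q} F_1(T_1, S)\, F_2(T_2, S)$.

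Next I would sum over $q \in \{1, \ldots, M\}$ and invoke the partition property $\bigsqcup_{q=1}^{M} \cS_q = 2^U$ to merge the double sum back into a single sum over all subsets of $U$, recovering exactly the definition of $G$:
\[
\sum_{q=1}^{M} G_q(T_1, T_2) = \sum_{q=1}^{M} \sum_{S \in \cS_q} F_1(T_1, S)\, F_2(T_2, S) = \sum_{S \subseteq U} F_1(T_1, S)\, F_2(T_2, S) = G(T_1, T_2).
\]
Since $(T_1, T_2)$ was arbitrary, the matrix identity $G = \sum_{q} G_q$ follows.

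There is essentially no obstacle here; the content of the lemma is just the distributivity of matrix multiplication over column-wise partitions of the inner index, which the paper itself flags as trivial. The only bookkeeping worth mentioning is matching the column index $S \subseteq U$ of the matrices $F_p$ with the summation index in the matrix product $F_{1q} F_{2q}^{\top}$, which is automatic from the way the submatrices $F_{pq}$ were defined.
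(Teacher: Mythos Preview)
Your proof is correct and is exactly the kind of direct entry-wise verification the paper has in mind; indeed, the paper omits the proof as trivial, and your argument spells out precisely that triviality.
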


We will also apply this decomposition after removing some rows from the matrices $F_{pq}$. Then the index sets may be different for different $G_q$. To properly define the entry-wise addition in these cases, we simply make the convention that the missing entries equal zero.  

To employ a fast RMM algorithm we will call a function $\proc{Fast-RMM}(\cT_1, \cS, \cT_2)$. The function returns the product $E_1 E_2^{\top}$, where each $E_p$ is obtained from $F_p$ by only keeping the rows $\cT_p$ and the columns $\cS$. Note that we do not show the input matrices explicitly in the function call, as the submatrices will always be extracted from $F_1$ and $F_2$.

\subsection{A simple below-$3$ algorithm} 
\label{se:first}

We apply Lemma~\ref{lem:split} with $M = 2$ and split the columns to those that are smaller than a threshold $\sigma n$ and to those that are at least as large: 
\bes
	\cS_1 = \{S \subseteq U : |S| < \sigma n \}
	\quad \textrm{and} \quad
	\cS_2 = \{S \subseteq U : |S| \geq \sigma n \}\,.
\ees
We assume $\sigma n$ is an integer and that $\tfrac{1}{3} < \sigma < \tfrac{1}{2}$. We will optimize the parameter $\sigma$ later. 
The idea is to call fast RMM only for summing over the columns $\cS_1$ and to handle the remaining columns in a brute-force manner. The algorithm $\proc{Column}$ is given in Figure~\ref{fig:Columns}.  

\begin{figure}[t!]
 \begin{codebox}
 \Procname{{\bf function} $\proc{Columns-Directly}(\cS)$}
 \li $G[T] \gets 0$ for all $T \subseteq U$  
 \li \For $S \in \cS$
 \li \Do \For $T \subseteq U$ s.t.\ $S \subseteq T$
 \li \Do $G[T] \gets G[T] + F_1(T_1, S) F_2(T_2, S)$\End\End
 \li \Return $G$	
 \end{codebox}
 \begin{codebox}
 \Procname{{\bf Algorithm} $\proc{Columns}\big((f_i)_{i \in U}\big)$}
 \li $G[T] \gets 0$ for all $T \subseteq U$ 
 \li select $\sigma \in (\tfrac{1}{3}, \tfrac{1}{2})$
 \li $\cS_1 \gets \{S \subseteq U : |S| \leq \sigma n\}$
 \li $G \gets G + \proc{Fast-RMM}\big(2^{U_1}, \cS_1, 2^{U_2}\big)$
 \li $G \gets G + \proc{Columns-Directly}\big(2^U \setminus \cS_1\big)$
 \li \Return $G$
 \end{codebox}
\caption{The $\proc{Columns}$ algorithm for the multi-subset transform.}
\label{fig:Columns}
\end{figure}

Consider first the computation of the matrix $G_1$. We compute $G_1$ using fast RMM. The computational complexity depends on the number of columns in the matrices $F_{11}$ and $F_{21}$. Letting $C$ be the number of columns, the required number of operations for the matrix multiplication of dimensions $N \times C \times N$ is $O\big(N^{\rmm(k)}\big)$, where $k = \log_N C$. We have   
\be\label{eq:C}
	C = |\cS_1| = \sum_{s=0}^{\sigma n} {n \choose s} 
	\,\leq\, b(\sigma)^n\,,
\ee
where the inequality follows by Fact~\ref{fact:binom}.

Consider then the computation of the matrix $G_2$. To compute $G_2(T)$, for $T \subseteq U$, it suffices to compute the sum of the products $F_1(T_1, S) F_2(T_2, S)$ over all columns $S \subseteq T$  whose size is at least $\sigma n$. Thus, the required number pairs $(S, T)$ to be considered is at most
\be\label{eq:B}
	B := \sum_{s=\sigma n}^n {n \choose s} 2^{n-s}
	\,\leq\, n {n \choose \sigma n} 2^{n(1-\sigma)}
	\,\leq\, n\big(2^{1 - \sigma} b(\sigma)\big)^n	
\ee
where the penultimate inequality follows by Fact~\ref{fact:binommax} (since $1- \sigma < \tfrac{2}{3}$) and the last by Fact~\ref{fact:binom}. 

Let us finally combine the bounds in \eqref{eq:C} and \eqref{eq:B}.  

\begin{proposition}
For any $\epsilon > 0$, the number of operations required by $\proc{Columns}$ is 
\bes
	O\left(2^{n \left(\rmm\left(2 H(\sigma)\right) +\epsilon\right)/2} 
	+ n 2^{n\left(1-\sigma + H(\sigma)\right)} \right)\,.
\ees 
\end{proposition}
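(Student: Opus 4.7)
The plan is to analyze $\proc{Columns}$ as the sum of its two contributions and appeal to Lemma~\ref{lem:split} for correctness. Line~4 computes $G_1 = F_{11} F_{21}^\top$, summing over the columns $\cS_1$, and line~5 computes $G_2$, summing over the remaining columns $\cS_2 = 2^U \setminus \cS_1$. So it suffices to bound these two steps separately and add their costs (the $O(2^n)$ initialization in line~1 is negligible).

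For line~4, I would observe that $\proc{Fast-RMM}(2^{U_1}, \cS_1, 2^{U_2})$ multiplies matrices of dimensions $N \times C \times N$, where $N = 2^{n/2}$ and, by~\eqref{eq:C}, $C = |\cS_1| \leq b(\sigma)^n = N^{2H(\sigma)}$. Padding with zero columns if necessary, we can invoke the definition of $\rmm(k)$ with $k = 2H(\sigma)$ to obtain a cost of $O\bigl(N^{\rmm(2H(\sigma))+\epsilon}\bigr) = O\bigl(2^{n(\rmm(2H(\sigma))+\epsilon)/2}\bigr)$ ring operations. The submatrices $F_{11}$ and $F_{21}$ themselves can be built by, for each $S \in \cS_1$, computing the products $\prod_{i\in A} f_i(S)$ for all $A \subseteq U_p$ incrementally in $O(2^h)$ multiplications; the total setup cost $O(|\cS_1| \cdot 2^h) = O(2^{n(H(\sigma)+1/2)})$ is absorbed by the RMM cost, since the trivial lower bound $\rmm(k) \geq k+1$ applied at $k = 2H(\sigma) \geq 1$ gives $\rmm(2H(\sigma))/2 \geq H(\sigma)+1/2$.

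For line~5, I would bound the running time of $\proc{Columns-Directly}$ by the number of pairs $(S, T)$ with $S \in \cS_2$ and $S \subseteq T$ that it visits. This number equals $B = \sum_{s > \sigma n} \binom{n}{s}\,2^{n-s}$, which~\eqref{eq:B} bounds by $n\,2^{n(1-\sigma+H(\sigma))}$. To ensure that each pair costs only $O(1)$ ring operations, I would, for each fixed $S \in \cS_2$, first precompute $F_p(T_p, S) = \prod_{i \in T_p} f_i(S)$ for all $T_p \supseteq S_p$ using an incremental pass in $O(2^{h - |S_p|})$ multiplications; this setup is dominated by the $2^{n-|S|}$ pairs with this $S$. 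Summing over $S \in \cS_2$ gives the total $O(B) = O(n\,2^{n(1-\sigma+H(\sigma))})$.

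Adding the two contributions produces the stated bound. There is no substantial obstacle: the only subtlety is that one must be slightly careful in amortizing the construction of the sparse submatrices against the RMM work on the one hand and against the number of visited pairs on the other, but in both cases the standard incremental product trick makes the setup disappear into the dominant cost.
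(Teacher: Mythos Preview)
Your proposal is correct and follows the same approach as the paper: bound the $\proc{Fast-RMM}$ call via \eqref{eq:C} and the direct pass via \eqref{eq:B}, then add. The paper presents the proposition as an immediate consequence of those two displayed bounds and does not spell out a proof; you go further by explicitly accounting for the cost of constructing the entries $F_p(T_p,S)$ (the incremental-product trick) and checking that this setup is dominated in each of the two phases, which is exactly the implementation detail the paper leaves implicit.
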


It remains to choose $\sigma$ so as to optimize the bound. Clearly the first term is increasing and the second term is decreasing in $\sigma$. Thus, the bound is (asymptotically) minimized by choosing a $\sigma$ that makes $\rmm\big(2 H(\sigma)\big)$ equal to  $2(1 - \sigma + H(\sigma))$. 
There are two obstacles to implement this idea: first, we only know upper bounds for $\rmm(k)$, for various $k$; second, no closed-form expression is known for the best upper bounds---upper bounds for $\rmm(k)$ have been computed and reported only at some points $k$ \cite{Gall2018}. 

Due to these complications, we resort to the following facts: 
\begin{fact}[\cite{Gall2018}]\label{fact:rmm175}
The exponent of RMM satisfies $\rmm(1.75) \leq 3.021591$. 
\end{fact}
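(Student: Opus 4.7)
The plan is simply to invoke the bound from Le Gall and Urrutia, since reproving rectangular matrix multiplication exponents requires the full weight of Strassen's laser method applied to tensor powers of a base algorithm, and no new idea of the present paper enters. Nonetheless, the shape of such an argument is worth sketching. One starts from a base tensor with a well-understood border rank decomposition, such as the Coppersmith--Winograd tensor or one of its modern refinements, and takes a high tensor power. To this power one applies the laser method: the tensor is decomposed into blocks indexed by integer vectors summing to the exponents of the power, and one selects a large family of pairwise ``independent'' blocks whose combined support encodes an unrestricted rectangular matrix multiplication of the desired aspect ratio.

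For the specific aspect ratio $k = 1.75$, one chooses a tensor power whose order admits a block structure closely approximating this ratio, and then optimizes a convex program whose variables parameterize the block-selection distributions. The exponent of the resulting rectangular multiplication algorithm is the value of this program, and solving it at high numerical precision yields the reported bound $\rmm(1.75) \leq 3.021591$.

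The main obstacle, were the goal to rederive the bound from scratch, is twofold: the combinatorial bookkeeping of the laser method at high tensor powers, and the numerical optimization of a convex program with many variables, whose intermediate constants depend sensitively on the chosen base tensor. In practice, one reads the published table in Le Gall and Urrutia, which tabulates bounds on $\rmm(k)$ for a dense grid of values of $k$, and verifies that the entry at $k = 1.75$ meets the claimed threshold.
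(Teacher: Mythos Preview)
Your proposal is correct and matches the paper's treatment exactly: the paper states this as a \emph{Fact} cited from \cite{Gall2018} with no proof, simply invoking the tabulated bound on $\rmm(1.75)$ from Le Gall and Urrutia. Your additional sketch of the laser-method machinery behind that bound is accurate context but goes beyond what the paper itself provides.
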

\begin{fact}\label{fact:rmmbound}
Let $k > 0$ and $r \geq 0$. The exponent of RMM satisfies $\rmm(k + r) \leq \rmm(k) + r$.
\end{fact}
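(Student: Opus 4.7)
The plan is to prove Fact~\ref{fact:rmmbound} by a straightforward blocking argument: cut the long dimension of the two rectangular matrices into many slabs, invoke RMM of exponent $\rmm(k)$ on each aligned pair of slabs, and sum the resulting $N\times N$ products. The cost should then be the number of slabs times the cost of one slab product, plus a negligible addition cost.

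Concretely, given an $N\times\lceil N^{k+r}\rceil$ matrix $A$ and an $\lceil N^{k+r}\rceil\times N$ matrix $B$, I would partition the column index set of $A$ (equivalently the row index set of $B$) into $M:=\lceil N^{r}\rceil$ contiguous blocks, each of width at most $\lceil N^{k}\rceil$; padding the last block with zeros if necessary keeps the block width exactly $\lceil N^{k}\rceil$ and does not affect the product. Writing $A=[A_1\;A_2\;\cdots\;A_M]$ and $B^{\top}=[B_1^{\top}\;B_2^{\top}\;\cdots\;B_M^{\top}]$, I would use the identity $AB=\sum_{j=1}^{M} A_j B_j$, where each term is a product of dimensions $N\times\lceil N^k\rceil\times N$.

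By the definition of $\rmm(k)$, each of the $M$ individual products can be computed in $O(N^{\rmm(k)+\epsilon/2})$ operations for any fixed $\epsilon>0$. Accumulating the $M$ resulting $N\times N$ matrices costs $O(MN^2)=O(N^{r+2})$ additional operations, which is absorbed into the main term because $\rmm(k)\geq 2$ (the output has $N^2$ entries, so $\rmm(k)\geq 2$ trivially). Hence the total cost is
\[
O\!\left(M\cdot N^{\rmm(k)+\epsilon/2}\right)
\;=\;O\!\left(N^{r+\rmm(k)+\epsilon/2}\cdot (1+N^{-r})\right)
\;=\;O\!\left(N^{\rmm(k)+r+\epsilon}\right),
\]
which is exactly the bound $\rmm(k+r)\leq \rmm(k)+r$ demanded by the definition of $\rmm(\cdot)$.

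The argument has essentially no hard step: the only mild technicalities are (i) handling the ceilings $\lceil N^{k+r}\rceil$ versus $M\lceil N^k\rceil$ by zero-padding at most one slab, which changes nothing because zero rows/columns contribute zero, and (ii) absorbing the additive $O(N^{r+2})$ term using $\rmm(k)\geq 2$. Everything else is a direct appeal to the definition of $\rmm(k)$ from Gall and Urrutia~\cite{Gall2018}.
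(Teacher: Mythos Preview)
Your argument is correct and is precisely the ``trivial reduction to multiple smaller instances'' that the paper invokes in its one-line justification of Fact~\ref{fact:rmmbound}. You have simply filled in the routine details (zero-padding the last slab, absorbing the $O(N^{r+2})$ additions via $\rmm(k)\ge 2$), so there is nothing to add.
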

(This follows by reducing the larger RMM instance trivially to multiple smaller instances.) 

Combining these two facts yields an upper bound:  
\bes
	\rmm(2 H(\sigma)) \leq \rmm(1.75) + 2 H(\sigma) - 1.75 \leq 1.271591 + 2 
H(\sigma)\,.
\ees
Now, solving 
	$1.271591 + 2 H(\sigma) = 2(1 - \sigma + H(\sigma))$
gives 
\bes
	\sigma = 1 - 1.271591/2 = 0.3642045\,.
\ees
With this choice of $\sigma$ the complexity bound becomes $O(2.994^n)$.

\subsection{A faster below-$3$ algorithm} 
\label{se:second}

Next we give a slightly faster algorithm to compute $G_1$. This will allow us to choose a larger threshold $\sigma$, thus also rendering the computation of $G_2$ faster.    

Instead of computing $G_1$ directly using fast RMM, we now compute some rows and columns of $G_1$ in a brute-force manner and only apply fast RMM to the remaining smaller matrix. Specifically, the algorithm only calls fast RMM to compute the entries $G_1(T_1, T_2)$ where the sizes of $T_1$ and $T_2$ exceed $\tau h$. We assume that $\tau h$ is an integer and that $\tau \in (\tfrac{1}{2}, \tfrac{2}{3})$. We will optimize the parameter $\tau$ together with $\sigma$ later. The algorithm $\proc{Rows{\&}Columns}$ is given in Figure~\ref{fig:Rows}. The correctness of the algorithm being clear, we proceed to analysing the  complexity in terms of the required number of arithmetic operations. 

\begin{figure}[t!]
 \begin{codebox}
 \Procname{{\bf function} $\proc{Rows-Trimmed}(\tau, \cS)$}
 \li $G[T] \gets 0$ for all $T \subseteq U$ 
 \li $\cT_p \gets \{T_p \subseteq U_p : |T_p| > \tau h\}$ for $p \gets 1, 2$
 \li \For $S \subseteq T \subseteq U$ s.t.\ $S\in \cS$ and ($T_1 \not\in \cT_1$ or $T_2 \not\in \cT_2$)
 \li \Do $G[T] \gets G[T] + F_1(T_1, S) F_2(T_2, S)$\End
 \li $G \gets G + \proc{Fast-RMM}\big(\cT_1, \cS, \cT_2\big)$
 \li \Return $G$
 \end{codebox}
 \begin{codebox}
 \Procname{{\bf Algorithm} $\proc{Rows{\&}Columns}\big((f_i)_{i \in U}\big)$}
 \li $G[T] \gets 0$ for all $T \subseteq U$ 
 \li select $\sigma \in (\tfrac{1}{3}, \tfrac{1}{2})$ and $\tau \in (\tfrac{1}{2}, \tfrac{2}{3})$
 \li $\cS_1 \gets \{S \in U : |S| \leq \sigma n\}$
 \li $G \gets G + \proc{Rows-Trimmed}(\tau, \cS_1)$
 \li $G \gets G + \proc{Columns-Directly}\big(2^U \setminus \cS_1\big)$
 \li \Return $G$
 \end{codebox}
\caption{The $\proc{Rows{\&}Columns}$ algorithm for the multi-subset transform.}
\label{fig:Rows}
\end{figure}

Consider first the computation of an entry $G_1(T_1, T_2)$ where $|T_1| \leq \tau h$. The number of pairs $(S, T)$ satisfying $S \subseteq T \subseteq U$ and $|T_1| \leq \tau h$ is given by 
\be \label{eq:Bprime}
	B' := 3^h \sum_{t=0}^{\tau h} {h \choose t} 2^t 
	\,\leq\, 3^h h {h \choose \tau h} 2^{\tau h} 
	\,\leq \,h \big(3 \cdot 2^\tau b(\tau)\big)^h\,; 
\ee
the penultimate inequality follows by Fact~\ref{fact:binommax} (since $\tau < \tfrac{2}{3}$) and the last inequality by Fact~\ref{fact:binom}. 

Similarly, computing the entries $G_1(T_1, T_2)$ for all $T_1 \subseteq U_1$ and $T_2 \subseteq U_2$ such that $|T_2| \leq \tau h$ requires at most $B'$ additions and multiplications. 

It remains to compute the entries $G_1(T_1, T_2)$ for $T_1 \subseteq U_1$ and $T_2 \subseteq U_2$ such that $|T_1|, |T_2| > \tau h$. This can be computed as a product of two matrices (submatrices of $F_1$ and $F_2^ \top$) whose sizes are at most $R \times C$ and $C \times R$, where $C$ is as before and 
\be\label{eq:R}
	R := \sum_{j=\tau h + 1}^{h}{h \choose j} \,\leq\, b(\tau)^h\,,
\ee
where the inequality follows by Fact~\ref{fact:binom} (since $\tau > \tfrac{1}{2}$).

Let us combine the bounds in \eqref{eq:Bprime} and \eqref{eq:R}: 
\begin{proposition}
For any $\epsilon > 0$, the number of operations required by $\proc{Rows{\&}Columns}$ is 
\bes
	O\Big(
	n (3 \cdot 2^{\tau} b(\tau))^{n/2}	
	+ b(\tau)^{(\rmm(k) +\epsilon)n/2} 
	+ n \big(2^{1-\sigma} b(\sigma)\big)^n 
	\Big)\,,
	\quad \textrm{where }
	k = 2 \log_{b(\tau)} b(\sigma)\,.
\ees
\end{proposition}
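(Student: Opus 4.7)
My approach is to observe that $\proc{Rows{\&}Columns}$ consists of three separable blocks of arithmetic work, and to match their operation counts one-to-one with the three summands in the stated bound. The three blocks are: (a)~the brute-force loop inside $\proc{Rows-Trimmed}$ that processes the pairs $(S,T)$ with $S\in \cS_1$, $S\subseteq T$, and at least one of $T_1, T_2$ below the threshold $\tau h$; (b)~the single $\proc{Fast-RMM}(\cT_1,\cS_1,\cT_2)$ call inside $\proc{Rows-Trimmed}$; and (c)~the brute-force call $\proc{Columns-Directly}(2^U\!\setminus\!\cS_1)$, which ranges over the large columns $|S|>\sigma n$.

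For block (a), by the symmetry between $U_1$ and $U_2$ it suffices to bound the subcase $|T_1|\leq \tau h$ and double. Dropping the constraint $S\in \cS_1$, which only shrinks the count, the number of relevant pairs is exactly $3^h\sum_{t=0}^{\tau h}\binom{h}{t}2^t$; this is precisely the quantity $B'$ already bounded in \eqref{eq:Bprime} as $O\bigl(n(3\cdot 2^{\tau}b(\tau))^{n/2}\bigr)$, matching the first summand. For block (c), the same argument that produced \eqref{eq:B} applies verbatim and yields $O\bigl(n(2^{1-\sigma}b(\sigma))^n\bigr)$, matching the third summand.

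For block (b), $\proc{Fast-RMM}$ multiplies submatrices of dimensions $R\times C$ and $C\times R$, where by \eqref{eq:R} we have $R\leq b(\tau)^h$ and by \eqref{eq:C} we have $C\leq b(\sigma)^n = b(\sigma)^{2h}$. Setting $N:=R$ and writing $C=N^k$ gives $k=\log_N C = 2\log_{b(\tau)} b(\sigma)$, and then the very definition of $\rmm(k)$ provides $O\bigl(N^{\rmm(k)+\epsilon}\bigr) = O\bigl(b(\tau)^{(\rmm(k)+\epsilon)n/2}\bigr)$ arithmetic operations for any $\epsilon>0$, matching the middle summand. Adding the three bounds finishes the proof.

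The only subtlety, and the step I would write most carefully, is the application of the $\rmm(k)$ definition in block (b): that definition takes the long side $N^k$ to be at least as long as the short side $N$, i.e.\ $k\geq 1$, equivalently $b(\sigma)^2\geq b(\tau)$. For the parameter choices $\sigma\in(\tfrac13,\tfrac12)$ and $\tau\in(\tfrac12,\tfrac23)$ that the subsequent optimization will make, this inequality is expected to hold at the optimum; should it fail for some admissible $(\sigma,\tau)$, one can transpose the product (so that now the short side is $C$) and apply Fact~\ref{fact:rmmbound} to return to the standard regime without worsening the asymptotic bound. No further ingredients are needed beyond this bookkeeping.
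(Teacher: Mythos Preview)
Your proposal is correct and mirrors the paper's own argument exactly: the paper derives the three summands from the bounds $B'$ in \eqref{eq:Bprime}, $R$ and $C$ in \eqref{eq:R} and \eqref{eq:C}, and $B$ in \eqref{eq:B}, and the proposition merely states ``combine the bounds,'' which is precisely what you do. Your extra caution about needing $k\geq 1$ is unnecessary here, since the paper defines $\rmm(k)$ for all $k\geq 0$, but it does no harm.
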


To set the parameters $\sigma$ and $\tau$, we resort to the bound $\rmm(k) \leq 1.271591 + k$ (Fact~\ref{fact:rmm175} and Fact~\ref{fact:rmmbound}). Balancing the latter two terms in the bound yields the equation  
\bes
	(1.271591 + k) H(\tau)
	= 
	2\big(1 - \sigma + H(\sigma)\big)\,.
\ees
Equivalently, $1.271591 \cdot H(\tau) = 2(1 - \sigma)$. Solving for $\sigma$ and equating the first and the third term in the bound leaves us the equation
\bes
	\log_2 3 + \tau + H(\tau) 
	= 
	1.271591 \cdot H(\tau) + 2 H\big(1 - 0.6357955 \cdot H(\tau)\big)\,.
\ees
By numerical calculations we find one solution in the valid range, $\tau \approx 0.59777$, and correspondingly $\sigma \approx 0.38185$.  
With these choices the complexity bound becomes $O(2.985^n)$. 
This completes the proof of Theorem~\ref{thm:fmst}.


\subsection{A covering based algorithm} 
\label{se:third}

The previous algorithms were based on pruning some columns and rows of the matrices $F_1$ and $F_2$, and applying fast RMM to the remaining multiplication of two reduced matrices. Now, we take a different approach and reduce the original problem instance into multiple, smaller RMM instances applying Lemma~\ref{lem:split} with some $M > 2$. To this end, we cover---in the sense of a set cover---the columns by multiple groups such that the columns in one group contain a large block of zero entries (in the same set of rows) in the matrices $F_1$ and $F_2$.  

It will be convenient to consider sets of fixed sizes. For a set $V$ and a nonnegative integer $s$, write ${V \choose s}$ for the set of all $s$-element subsets of $V$. Let $s_1, s_2 \in \{0, 1, \ldots, h\}$ fix the sizes of the intersection of a column with the sets $U_1$ and $U_2$. We wish to cover the set (of set pairs) ${U_1 \choose s_1} \times {U_2 \choose s_2}$ by a small number of sets of the form ${K_1 \choose s_1} \times {K_2 \choose s_2}$, where the sets $K_1$ and $K_2$ are of some fixed sizes $k_1 \geq s_1$ and $k_2 \geq s_2$. The following classic result \cite{Erdos1974} shows that this covering design problem has an efficient solution: 

\begin{theorem}[\cite{Erdos1974}]\label{thm:cover}
Let $c(v, k, s)$ be the minimum number of subsets of $\{1,2,\ldots,v\}$ of size $k$ such that every subset of size $s \leq k$ is contained by at least one of the sets. We have 
\bes
    c(v, k, s) {k \choose s}{v \choose s}^{-1} \leq 1 + \ln {k \choose s}\,.
\ees
\end{theorem}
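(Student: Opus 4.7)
I would prove this by the classical two-phase greedy argument. Set $\alpha := \binom{k}{s}\binom{v}{s}^{-1}$, the fraction of $s$-subsets of $\{1,\ldots,v\}$ contained in any fixed $k$-subset. The first ingredient is an averaging claim: for any nonempty family $\mathcal{U}$ of $s$-subsets, some $k$-subset $K$ contains at least $\alpha|\mathcal{U}|$ members of $\mathcal{U}$. This follows by double counting: summing the indicator $[A\subseteq K]$ over all $k$-subsets $K$ and all $A\in\mathcal{U}$ gives $|\mathcal{U}|\binom{v-s}{k-s}$, and averaging over the $\binom{v}{k}$ choices of $K$ yields $\alpha|\mathcal{U}|$.

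Equipped with this, I run the natural greedy procedure: at each step, pick a $k$-subset containing the largest number of still-uncovered $s$-subsets. Letting $U_t$ denote the uncovered count after $t$ steps, the averaging claim gives $U_{t+1}\leq(1-\alpha)U_t$, hence $U_t\leq \binom{v}{s}(1-\alpha)^t\leq\binom{v}{s}\,e^{-\alpha t}$.

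The key idea that yields the $\ln\binom{k}{s}$ factor, rather than the weaker $\ln\binom{v}{s}$ one would get from naive greedy, is to split the procedure into two phases. In phase one, I take $t_1 = \lceil\alpha^{-1}\ln\binom{k}{s}\rceil$ greedy steps, after which $U_{t_1}\leq\binom{v}{s}/\binom{k}{s}=\alpha^{-1}$. In phase two, I cover each of the at most $\alpha^{-1}$ remaining uncovered $s$-subsets by an arbitrary $k$-subset containing it. The total number of $k$-subsets used is then at most $\alpha^{-1}(1+\ln\binom{k}{s})$, which upon multiplying by $\alpha$ is exactly the stated bound.

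The only delicate point I expect to encounter is handling the integrality of $t_1$ and of the uncovered count surviving phase one: treating $t_1$ as a real parameter and rounding only at the end, or arguing slightly more carefully with ceilings, suffices to avoid spurious additive constants. This is routine and does not alter the substance of the argument.
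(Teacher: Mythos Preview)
The paper does not prove this statement; it is quoted as a known result of Erd\H{o}s and Spencer and used as a black box in the analysis of \textsc{Cover-Columns}. So there is no in-paper proof to compare against, and for the paper's purposes only the bound up to a polynomial factor in $k$ is ever used.

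Your argument is one of the two standard proofs and is essentially correct. The double-counting identity $\binom{v-s}{k-s}/\binom{v}{k}=\binom{k}{s}/\binom{v}{s}$ is right, the exponential decay $U_t\le\binom{v}{s}e^{-\alpha t}$ follows, and the two-phase split is precisely what upgrades the naive $\ln\binom{v}{s}$ factor to $\ln\binom{k}{s}$. The one point that is not quite as routine as you suggest is the ceiling on $t_1$: taking $t_1=\lceil\alpha^{-1}\ln\binom{k}{s}\rceil$ introduces an additive $+1$ set, and after multiplying by $\alpha$ this leaves an extra $+\alpha$ on the right-hand side rather than the stated constant exactly. The cleanest way to get the constant on the nose is the original probabilistic version of the same idea: include each $k$-set independently with probability $p$, then add one $k$-set per still-uncovered $s$-set; optimizing the expected total $p\binom{v}{k}+\binom{v}{s}(1-p)^{\binom{v-s}{k-s}}$ over the \emph{real} parameter $p$ yields exactly $\alpha^{-1}(1+\ln\binom{k}{s})$ with no rounding. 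This is a cosmetic difference, not a substantive gap.
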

In particular, $c(v, k, s)$ is within the factor $k$ of the obvious lower bound ${v \choose s}{k \choose s}^{-1}$. 

\begin{remark}
Although the work needed for constructing a covering does not contribute to the number of operations in the ring $\cR$, a remark is in order if one is interested in the required number of other operations. The authors are not aware of any deterministic algorithm for constructing an optimal covering in time polynomial in ${v \choose k} + {v \choose s}$, while asymptotically optimal randomized polynomial-time algorithms are known \cite{Gordon1996}. 

Fortunately, for our purposes it suffices to run the well known greedy algorithm that iteratively picks a set that covers the largest number of yet uncovered elements. It finds a set cover whose size is within a logarithmic factor of the optimum, which is sufficient in our context. Furthermore, it can be implemented to run in time linear in the input size \cite[Ex.~35.3--3]{Cormen2009}, which is ${v \choose k} {k \choose s} \leq 3^v$ in our case (with $v = h = n/2$). 
\end{remark}

From now on, we assume that for $p = 1, 2$ we are given a set family $\cK_p \subseteq {U_p \choose k_p}$ that has the desired coverage property, i.e., $\big\{ {K_p \choose s_p} : K_p \in \cK_p \big\}$ is a set  cover of ${U_p \choose s_p}$, so that for every column $S \subseteq U$ satisfying $|S_1| = s_1$, $|S_2| = s_2$ there is a pair $(K_1, K_2) \in \cK_1 \times \cK_2$ such that $S_1 \subseteq K_1$, $S_2 \subseteq K_2$. In what follows, we will assume that some appropriate values of $k_1$, $k_2$ are chosen based on $s_1, s_2$; we will return back to the issue of finding good values at the end of this subsection. 

For each pair $(K_1, K_2)$, we construct a submatrix $E_1$ of $F_1$ as follows: remove from $F_1$ all columns $S$ not covered by $(K_1, K_2)$, and all rows $T_1$ whose intersection with $K_1$ contains less than $s_1$ elements (as otherwise we cannot have $S_1 \subseteq T_1$ and the entry $F_1(T_1, S)$ vanishes). We construct a matrix $E_2$ analogously by removing columns and rows from $F_2$. The dimensions of the matrix product $E_1 E_2^{\top}$ are $R_1 \times C' \times R_2$, where 
\bes
    R_1 := \sum_{j=s_1}^{k_1} {k_1 \choose j} 2^{h-k_1}\,,\quad 
    C' := {k_1 \choose s_1}{k_2 \choose s_2}\,,\quad 
    R_2 := \sum_{j=s_2}^{k_2} {k_2 \choose j} 2^{h-k_2}\,.
\ees

Algorithm $\proc{Cover-Columns}$, given in Figure~\ref{fig:cover}, organizes the reduction to multiple RMM instances like this using Lemma~\ref{lem:split}. Specifically, from the set cover of the columns it extracts a set partition by trivially keeping track of the already covered columns.   

\begin{figure}[t!]
 \begin{codebox}
 \Procname{{\bf Algorithm} $\proc{Cover-Columns}\big((f_i)_{i\in U}\big)$}
 \li $G[T] \gets 0$ for all $T \subseteq U$ 
 \li $\cC \gets \emptyset$ \RComment Already covered columns
 \li \For $(s_1, s_2) \in \{0, 1, \ldots, h\}^2$
 \li \Do select $k_1$ and $k_2$
 \li $\cK_p \gets \proc{Covering-Design}(s_p, k_p, U_p)$ for $p \gets 1, 2$
 \li \For $(K_1, K_2) \in \cK_1 \times \cK_2$ 
 \li \Do $\cS \gets \{S_1 \cup S_2 : S_1 \in K_1 \text{ and } S_2 \in K_2\}$ 
 \li $G \gets G + \proc{Rows-Trimmed}(0, \cS \setminus \cC)$ \RComment Trim only all-zero rows 
 \li $\cC \gets \cC \cup \cS$ \End \End
 \li \Return $G$
 \end{codebox}
\caption{The $\proc{Cover-Columns}$ algorithm for the multi-subset transform.}
\label{fig:cover}
\end{figure}

To analyze the complexity of the algorithm, let us first bound the dimensions $R_1$, $C'$, and $R_2$ for fixed $s_1$, $s_2$, $k_1$, $k_2$. We aim at bounds of the form $N^{\alpha}$ for some $0 < \alpha < 2$, and therefore parameterize the set sizes as 
\bes
	s_p = \sigma_p h 
	\quad \textrm{and} \quad 
	k_p = \kappa_p h\,, \qquad p = 1, 2\,.
\ees
Thus $0 \leq \sigma_p \leq \kappa_p \leq 1$. In what follows, we let $\sigma_p/\kappa_p$ evaluate to $0$ if $\sigma_p = \kappa_p = 0$. 

\begin{lemma}\label{lem:aux}
We have 
\bes
    R_1 \,\leq\, N^{\beta_1}\,,\quad 
    C' \,\leq\, N^{\alpha_1 + \alpha_2}\,,\quad 
    R_2 \,\leq\, N^{\beta_2}\,, 
\ees
where 
\be \label{eq:aux}
	\alpha_p := \kappa_p H\Big(\frac{\sigma_p}{\kappa_p}\Big) 
	\quad \textrm{and} \quad
	\beta_p := 1 - \kappa_p + \kappa_p H\Big(\max\Big\{\frac{\sigma_p}{\kappa_p}, \frac{1}{2}\Big\}\Big)\,,  \qquad p = 1, 2\,.
\ee
\end{lemma}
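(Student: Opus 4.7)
The plan is to prove the three stated bounds one at a time, converting the formulas for $R_1$, $C'$, $R_2$ into exponentials base $2$ (recall $N = 2^h$) and applying Fact~\ref{fact:binom} in its standard form or its symmetric counterpart. The key computation is that $H$ is symmetric around $1/2$, so $\binom{n}{k} \le 2^{n H(k/n)}$ holds for \emph{every} $k \in \{0, 1, \ldots, n\}$, not only for $k \le n/2$: if $k > n/2$, apply Fact~\ref{fact:binom} to $\binom{n}{n-k}$ and use $H(1-x) = H(x)$.

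For the bound $C' \le N^{\alpha_1 + \alpha_2}$, I would simply apply this extended binomial bound to each factor of $C' = \binom{k_1}{s_1} \binom{k_2}{s_2}$, obtaining $\binom{k_p}{s_p} \le 2^{k_p H(s_p/k_p)} = 2^{h \kappa_p H(\sigma_p / \kappa_p)} = N^{\alpha_p}$; multiplying the two bounds is immediate.

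For the bounds on $R_1$ and $R_2$ (the two are symmetric), the main obstacle is that the sum $\sum_{j=s_p}^{k_p}\binom{k_p}{j}$ is an \emph{upper} tail, and Fact~\ref{fact:binom} is stated for the lower tail up to $k/n \le 1/2$. I would split into two cases. When $\sigma_p/\kappa_p \ge 1/2$, I rewrite the tail via $\sum_{j=s_p}^{k_p}\binom{k_p}{j} = \sum_{j=0}^{k_p - s_p}\binom{k_p}{j}$; now $k_p - s_p \le k_p/2$, so Fact~\ref{fact:binom} gives the bound $2^{k_p H((k_p - s_p)/k_p)} = 2^{k_p H(\sigma_p/\kappa_p)}$ by symmetry of $H$. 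When $\sigma_p/\kappa_p < 1/2$, I use the trivial bound $\sum_{j=s_p}^{k_p}\binom{k_p}{j} \le 2^{k_p} = 2^{k_p H(1/2)}$. The two cases combine into a single bound $2^{k_p H(\max\{\sigma_p/\kappa_p,\, 1/2\})}$, matching exactly the $\max$ appearing in the definition of $\beta_p$.

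Finally, multiplying this by the prefactor $2^{h - k_p}$ yields
\[
R_p \;\le\; 2^{h - k_p}\, 2^{k_p H(\max\{\sigma_p/\kappa_p,\, 1/2\})}
\;=\; 2^{h\left(1 - \kappa_p + \kappa_p H(\max\{\sigma_p/\kappa_p,\, 1/2\})\right)}
\;=\; N^{\beta_p},
\]
which is the desired inequality. The whole argument is essentially bookkeeping; the only nontrivial point is the split into the two tail regimes, which is precisely why the definition of $\beta_p$ in \eqref{eq:aux} needs the $\max$ with $1/2$.
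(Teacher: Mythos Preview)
Your proof is correct and follows essentially the same approach as the paper: the same two-case split on whether $\sigma_p/\kappa_p$ exceeds $1/2$ (equivalently, $\kappa_p \le 2\sigma_p$), with the trivial bound $2^{k_p}$ in one case and Fact~\ref{fact:binom} applied to the reflected tail in the other. Your explicit remark that the symmetry $H(x)=H(1-x)$ extends the single-binomial bound to all $s_p\in\{0,\ldots,k_p\}$ is a point the paper leaves implicit when bounding $C'$.
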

\begin{proof}
The bound for $C'$ follows directly from the definitions of $\sigma_p$, $\kappa_p$, $\alpha_p$ and from Fact~\ref{fact:binom}. 

For the bound on $R_1$ (equivalently $R_2$), suppose first that $\kappa_1 \geq 2 \sigma_1$. Then using the simple inequality $\sum_{j=s_1}^{k_1} {k_1 \choose j} \leq 2^{k_1} = N^{\kappa_1 H(1/2)}$ gives the claimed bound. Otherwise, $\kappa_1 \leq 2 \sigma_1$ and thus, by Fact~\ref{fact:binom}, $\sum_{j=s_1}^{k_1} {k_1 \choose j} \leq 2^{k_1 H(1-\sigma_1/\kappa_1)} = N^{\kappa_1 H(\sigma_1/\kappa_1)}$, implying the claimed bound. 
\end{proof}

It remains to turn the bounds on the dimensions to a bound on the complexity of the corresponding RMM and sum up these bounds over the multiple matrix multiplication tasks.
\begin{proposition}\label{prop:cc}
For any $\epsilon > 0$, the number of operations required by $\proc{Cover-Columns}$ is $O\big( 2^{(\gamma+\epsilon) n/2}\big)$, where 
\be \label{eq:gamma}  
	\gamma := 	
	\max_{\substack{0 \leq \sigma_1 \leq 1\\
		0 \leq \sigma_2 \leq 1}}	
	\min_{\substack{\sigma_1 \leq \kappa_1 \leq 1\\
		\sigma_2 \leq \kappa_2 \leq 1}}
	H(\sigma_1) + H(\sigma_2) - \alpha_1 - \alpha_2 
	+ \beta_1 + \beta_2
	+  \beta_{*} \Big(\rmm\Big(\frac{\alpha_1 + \alpha_2}{\beta_{*}}\Big) - 2\Big)\,,
\ee   
with $\alpha_p$ and $\beta_p$ as defined in \eqref{eq:aux}, and $\beta_{*} := \min\{\beta_1, \beta_2\}$.
\end{proposition}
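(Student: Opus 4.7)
The plan is to analyze the cost of one iteration of the outer $(s_1,s_2)$-loop of $\proc{Cover-Columns}$ and then observe that summing over the at most $(h+1)^2$ such pairs contributes only a polynomial factor, absorbed into the $\epsilon$ in the exponent. Within a single iteration, the total work is the product of two quantities: the number $|\cK_1 \times \cK_2|$ of covering pairs supplied by Theorem~\ref{thm:cover}, and the cost of a single fast RMM on a submatrix of dimensions at most $N^{\beta_1} \times N^{\alpha_1+\alpha_2} \times N^{\beta_2}$ (Lemma~\ref{lem:aux}).

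For the covering count, Theorem~\ref{thm:cover} gives $|\cK_p| \leq (1 + \ln\binom{k_p}{s_p}) \binom{h}{s_p}\binom{k_p}{s_p}^{-1}$. Estimating the binomial coefficients by Fact~\ref{fact:binom} and absorbing $\mathrm{poly}(n)$ factors into $N^{o(1)}$ yields $|\cK_1 \times \cK_2| \leq N^{H(\sigma_1)+H(\sigma_2)-\alpha_1-\alpha_2 + o(1)}$, which produces the first group of terms in $\gamma$. For the single RMM, I would reduce to the $\rmm(\cdot)$-shape of side $M := N^{\beta_*}$: assuming without loss of generality $\beta_2 = \beta_*$, split the $R_1$-side into $\lceil R_1/M\rceil \leq N^{\beta_1 - \beta_*} + 1$ horizontal strips. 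Each strip is an $M \times C' \times M$ product with $C' \leq M^{(\alpha_1+\alpha_2)/\beta_*}$, costing $O(M^{\rmm(k)+\epsilon})$ ring operations for $k := (\alpha_1+\alpha_2)/\beta_*$. Multiplying the strip count by the per-strip cost produces the per-RMM bound $O(N^{\beta_1+\beta_2+\beta_*(\rmm(k)-2)+\epsilon})$, which is exactly the remaining terms of $\gamma$. Since the algorithm selects $\kappa_1, \kappa_2$ adaptively per iteration, the exponent can be minimized over $(\kappa_1,\kappa_2) \in [\sigma_1,1]\times[\sigma_2,1]$; since the bound must hold uniformly over the attained $(s_1, s_2)$, I take the $\max$ over $(\sigma_1, \sigma_2) \in [0,1]^2$. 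Rescaling via $N = 2^{n/2}$ then gives the claimed $O(2^{(\gamma+\epsilon)n/2})$ bound.

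The main obstacle will be the degenerate parameter regimes. If $\beta_* = 0$, then $\kappa_p = 1$ with $\sigma_p \in \{0, 1\}$ forces a trivial covering and a trivial iteration, which I would handle by a direct calculation that does not invoke $\rmm(\cdot)$. When the ratio $k = (\alpha_1+\alpha_2)/\beta_*$ falls outside the regime where an explicit upper bound for $\rmm(k)$ is useful, one appeals to $\rmm(k+r) \leq \rmm(k)+r$ (Fact~\ref{fact:rmmbound}) together with the trivial $\rmm(k) \geq \max\{2, k+1\}$ to interpolate. A final routine check is that the explicit set $\cC$ of already-covered columns in $\proc{Cover-Columns}$ promotes the cover into a set partition of $2^U$, validating the application of Lemma~\ref{lem:split} so that the iteration outputs indeed sum to $G$ without double-counting.
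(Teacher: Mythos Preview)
Your overall strategy---bound the number of covering pairs via Theorem~\ref{thm:cover}, bound each RMM via Lemma~\ref{lem:aux} after strip-splitting to a square instance of side $N^{\beta_*}$, then take a max over $(\sigma_1,\sigma_2)$ and a min over $(\kappa_1,\kappa_2)$---is exactly the paper's. The one substantive step you skip is the \emph{integrality} of $\kappa_p$. The quantity $\gamma$ in~\eqref{eq:gamma} is a minimum over \emph{real} $\kappa_p\in[\sigma_p,1]$, whereas the algorithm can only pick integers $k_p=\kappa_p h$. You assert that ``the exponent can be minimized over $(\kappa_1,\kappa_2)\in[\sigma_1,1]\times[\sigma_2,1]$'' without justifying that the real infimum is approximable to within $\epsilon/2$ by a choice with $\kappa_p h$ an integer. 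This is not automatic: the objective involves $H(\sigma_p/\kappa_p)$ and $\rmm\bigl((\alpha_1+\alpha_2)/\beta_*\bigr)$, and uniform continuity fails near the boundary of the parameter region (as $\sigma_p\to 0$ or $\sigma_p\to 1$, and correspondingly as $\beta_*\to 0$).

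The paper reserves half of $\epsilon$ for precisely this issue and argues in two parts. First, on the compact set $\Lambda_c=\{c\le\sigma_p\le 1-c,\ \sigma_p\le\kappa_p\le 1\}$ the function $\Gamma$ of~\eqref{eq:Gammaaux} is shown to be continuous, hence uniformly continuous, so rounding $\kappa_p$ to the nearest multiple of $1/h$ costs at most $\epsilon/2$ once $h$ is large enough. Second, on the boundary strip where some $\sigma_p\notin[c,1-c]$, the paper shows that the trivial choice $\kappa_1=\kappa_2=1$ already yields $\Gamma\le\rmm(1)+\epsilon/2$, and separately that the interior maximum satisfies $\Gamma(\tfrac12,\tfrac12,\kappa_1,\kappa_2)\ge 2.5>\rmm(1)$, so the boundary cannot realize the outer max. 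Your ``$\beta_*=0$'' paragraph gestures at degeneracies (and, incidentally, $\beta_p=0$ forces $\sigma_p=1$, not $\sigma_p\in\{0,1\}$) but supplies neither the compactness/uniform-continuity argument nor the boundary domination; without them the passage from the algorithm's discrete optimization to the continuous $\min$ in~\eqref{eq:gamma} is a genuine gap.
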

\begin{proof}
Let $\epsilon > 0$. 

Consider first the complexity of a single matrix multiplication with fixed $\sigma_p, \kappa_p$, for $p = 1, 2$. By Lemma~\ref{lem:aux} we obtain an upper bound by taking $N^{\max\{\beta_1, \beta_2\} - \beta_{*}} = N^{\beta_1 + \beta_2 - 2 \beta_{*}}$ matrix multiplications of dimensions $N^{\beta_{*}} \times N^{\alpha_1 + \alpha_2} \times N^{\beta_{*}}$. This gives us the upper bound $O\big(N^{\beta_1 + \beta_2 + \beta_{*}(\rmm(k) - 2) + \epsilon/2}\big)$, where $k = (\alpha_1 + \alpha_2)/\beta_{*}$. Note that we used only a half of $\epsilon\,$---we will need the other half for tolerating a nonzero underestimation that is due to minimizing $\kappa_p$ over reals. We will return to this issue at the end of the proof.  

Consider then the number of matrix multiplications for fixed $s_p, k_p$, for $p = 1, 2$. By Theorem~\ref{thm:cover} and by the approximation ratio of the greedy algorithm, the number is at most 
\bes
	n^4 {h \choose s_1}{h \choose s_2}{k_1 \choose s_1}^{-1}{k_2 \choose s_2}^{-1} 
	&\leq & n^5 b(\sigma_1)^h b(\sigma_2)^h b(\sigma_1/\kappa_1)^{-\kappa_1 h} b(\sigma_2/\kappa_2)^{-\kappa_2 h}\\
	& = & n^5 N^{H(\sigma_1)+H(\sigma_2) - \alpha_1 - \alpha_2}\,.
\ees
Here we used Fact~\ref{fact:binom} to bound the binomial coefficients, observing that $(2k_1)^{1/2}(2k_2)^{1/2} \leq n$. 

Now, combine the above two bounds, recall that $N = 2^{n/2}$, and observe that replacing the sum over $(s_1, s_2)$ by the maximum over $(\sigma_1, \sigma_2)$ is compensated by adding a factor of $n^2$ to the bound. The algorithm can select optimal $k_1$ and $k_2$ by optimizing the upper bound, which costs yet another factor of $n^2$. Due to the constant $\epsilon$ in the exponent, we can ignore the $n^{O(1)}$ factor in the asymptotic complexity bound.

To complete the proof, we show that for any values of $\sigma_p$ and $\kappa_p$ (hence also for the optimal values) and for any large enough integer $h$, there are rational numbers $\kappa'_p \geq \sigma_p$ such that 
(i) $\kappa'_p h$ are integers and 
(ii) $\Gamma(\sigma_1, \sigma_2, \kappa'_1, \kappa'_2) \leq \Gamma(\sigma_1, \sigma_2, \kappa_1, \kappa_2) + \epsilon/2$, where 
\be \label{eq:Gammaaux}
	\Gamma(\sigma_1, \sigma_2, \kappa_1, \kappa_2) 
	:= H(\sigma_1) + H(\sigma_2) + \beta_1 + \beta_2 
	+ \beta_{*}\Big(\rmm\Big(\frac{\alpha_1 + \alpha_2}{\beta_{*}}\Big)  - \Big(\frac{\alpha_1 + \alpha_2}{\beta_{*}}\Big) - 2\Big)\,.
\ee 
Note that we rearranged some terms in \eqref{eq:gamma}, for a reason that will be revealed in a moment.

We will consider two cases: either $\sigma_1$ or $\sigma_2$ is near the boundary values $0$ or $1$, or both are in $[c, 1-c]$, where $c > 0$ is a small constant. We choose $c < \frac{1}{2}$ such that if $0 \leq \sigma_1 < c$ or $1-c < \sigma_1 \leq 1$, then regardless of $\sigma_2$,  
\bes
	\Gamma(\sigma_1, \sigma_2, 1, 1)
	\leq \rmm(1) + \epsilon/2\,,  
\ees
and symmetrically for $\sigma_2$. To see that this is possible, observe first that at $\kappa_1 = \kappa_2 = 1$ we have $\alpha_1 = H(\sigma_1)$, $\alpha_2 = H(\sigma_2)$, and thus   
\bes
	\Gamma(\sigma_1, \sigma_2, 1, 1)
	&=& \beta_1 + \beta_2 
	+ \beta_{*}\Big(\rmm\Big(\frac{\alpha_1 + \alpha_2}{\beta_{*}}\Big)  - 2\Big)\\
	&\leq& \beta_1 + \beta_2 
	+ \beta_{*}\Big(\rmm\Big(\frac{\alpha_{*}}{\beta_{*}}\Big) + \frac{\alpha_1 + \alpha_2 - \alpha_{*}}{\beta_{*}}  - 2\Big)\,, 
\ees
where $\alpha_{*} := \alpha_p$ if $\beta_{*} = \beta_p$. Observe that $\alpha_{*} \leq \beta_{*}$. Since $\rmm(1) - 2 \geq 0$ and $\alpha_1, \alpha_2, \beta_1, \beta_2 \leq 1$, 
\bes
	\Gamma(\sigma_1, \sigma_2, 1, 1)
	\;\leq\; \alpha_1 + \alpha_2 - \alpha_{*} + \beta_1 + \beta_2 + \rmm(1) - 2
	\;\leq\; \rmm(1) + H(\sigma_1)\,.
\ees
For the latter inequality we used the facts that $\alpha_{*} = \alpha_2$ if $\sigma_1 < c$ and that $\beta_1 = H(\sigma_1)$ if $\sigma_1 > 1 - c$. 
Finally, we observe that $H(\sigma_1)$ tends to $0$ when $\sigma_1$ tends to $0$ or $1$. 

On the other hand, we have the lower bound  
$\Gamma(\frac{1}{2}, \frac{1}{2}, \kappa_1, \kappa_2)
	\geq 2 + \beta_1 + \beta_2 - \beta_{*} 
	\geq 2.5 > \rmm(1)$, 
since $\rmm(z) - z \geq 1$ and $\beta_p = 1 - \kappa_p + \kappa_p H\big(1/(2\kappa_p)\big)\geq \kappa_p \geq \frac{1}{2}$; here we used the fact that $H(x) \geq 2 - 2x$ for $x \in \big[\frac{1}{2}, 1\big]$.  

We may thus restrict out attention to the domain 
\bes
	\Lambda_c := \big\{(\sigma_1, \sigma_2, \kappa_1, \kappa_2) : 
	c \leq \sigma_1, \sigma_2 \leq 1-c,\; 
	\sigma_1 \leq \kappa_1 \leq 1,\; 
	\sigma_2 \leq \kappa_2 \leq 1 \big\}\,.
\ees

We now show that $\Gamma$ is continuous on $\Lambda_c$. Observe first that the functions $H$, $\alpha_p$, and $\beta_p$ are continuous on $\Lambda_c$ (as $\kappa_p > c$). We also have that $\beta_{*}$ is continuous and strictly positive (as $\sigma_p \leq 1-c$) and that $z \mapsto \rmm(z)$ is continuous (as $|\rmm(z+\delta) - \rmm(z)| \leq \delta$ for all $\delta > 0$).   

Since the domain $\Lambda_c$ is compact, we have that $\Gamma$ is uniformly continuous on $\Lambda_c$. This in turn implies that there is a $\delta_{\epsilon} > 0$ such that (ii) holds whenever $|\kappa'_p - \kappa_p| < \delta_{\epsilon}$, implying that we can make both (i) and (ii) hold for all $h > 1/\delta_{\epsilon}$ by putting $\kappa'_p := \lceil \kappa_p h \rceil / h$.  
\end{proof}

Now we know that the complexity of the algorithm is $O\big( 2^{(\gamma+\epsilon) n/2}\big)$, but we do not know how large $\gamma$ is. 
Unlike for the simpler algorithms given in the previous subsections, we cannot just select some values of the parameters $\sigma_p$ and $\kappa_p$ and bound $\gamma$ from above by $\Gamma(\sigma_1, \sigma_2, \kappa_1, \kappa_2)$, as defined in \eqref{eq:Gammaaux}, for we do not know the maximizing values of $\sigma_p$. Since $\Gamma$ is uniformly continuous on the domain $\Lambda_c$, one could in principle prove any fixed strict upper bound on $\gamma$ with a sufficiently large, finite computation.
While at the present time the authors have not produced such a proof, evaluations of $\Gamma(\sigma_1, \sigma_2, \kappa_1, \kappa_2)$ at various values of the four parameters suggest the following: 

\begin{conjecture}
The number of operations required by $\proc{Cover-Columns}$ is $O(2.930^n)$. 
\end{conjecture}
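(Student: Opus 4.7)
The plan is to promote the authors' numerical suggestion into a proof by a finite verification, as the excerpt itself hints is possible. Set $\gamma^{*} := 2\log_2 2.930 \approx 3.1020$; by Proposition~\ref{prop:cc} it is enough to prove $\gamma \leq \gamma^{*}$. The boundary argument already given in the proof of Proposition~\ref{prop:cc} bounds $\Gamma(\sigma_1, \sigma_2, 1, 1)$ by $\rmm(1) + H(\sigma_1)$ whenever $\sigma_1 \in [0, c] \cup [1-c, 1]$, and symmetrically in $\sigma_2$; since $\rmm(1) < 2.3729 < \gamma^{*}$, one can choose a small $c > 0$ so that the entire boundary region is already dispatched by $\kappa_1 = \kappa_2 = 1$. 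It remains to verify $\Gamma \leq \gamma^{*}$ with some witness $(\kappa_1, \kappa_2)$ at every $(\sigma_1, \sigma_2)$ in the compact interior domain $\Lambda_c$.

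The finite verification would proceed in four steps. First, replace $\rmm$ by a rigorous piecewise-linear upper envelope built from the tabulated values in~\cite{Gall2018}, the slope bound $\rmm(k + r) \leq \rmm(k) + r$ from Fact~\ref{fact:rmmbound}, and the convention $\rmm(k) = k$ for $k \leq 1$; call this envelope $\overline{\rmm}$. Second, discretize $[c, 1-c]^2$ on a grid of mesh $\delta$, to be chosen at the end. Third, at each grid node $(\sigma_1^*, \sigma_2^*)$ perform a numerical inner minimization of $\Gamma$ (with $\overline{\rmm}$ in place of $\rmm$) over $(\kappa_1, \kappa_2) \in [\sigma_1^*, 1] \times [\sigma_2^*, 1]$ and record a witness pair $(\kappa_1^*, \kappa_2^*)$. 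Fourth, derive explicit Lipschitz constants for $\Gamma$ in $(\sigma_1, \sigma_2)$ on $\Lambda_c$ and pick $\delta$ small enough that the witness bound at each grid node, inflated by the Lipschitz slack over the corresponding cell, remains below $\gamma^{*}$. Interval arithmetic at each grid cell makes the evaluations certifiable rather than merely numerical.

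The hardest part, and the reason the authors stop short of a proof, is controlling the Lipschitz constant of the composite term $\beta_{*}\bigl(\overline{\rmm}(k) - k - 2\bigr)$ with $k = (\alpha_1 + \alpha_2)/\beta_{*}$. Its partial derivatives blow up as $\beta_{*} \to 0$, so either $c$ has to be chosen carefully to bound $\beta_{*}$ away from zero, or the grid must be adaptively refined where $\beta_{*}$ is small. A secondary complication is the piecewise structure of both $\beta_p$ (with a break at $\kappa_p = 2\sigma_p$) and $\overline{\rmm}$: the inner minimizer can sit on any of these faces, so the witness search must consider every combinatorial regime separately, and the Lipschitz constants must be derived piece by piece. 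Two shortcuts should make the computation tractable in practice: the symmetry $(\sigma_1, \kappa_1) \leftrightarrow (\sigma_2, \kappa_2)$ halves the grid by restricting to $\sigma_1 \leq \sigma_2$, and the authors' numerical evidence locates the outer maximizer in a small region, so loose bounds such as $\alpha_p \leq H(\sigma_p)$ and $\beta_p \leq 1$ should suffice to clear most cells, leaving only a handful near the maximizer to be handled with the full case analysis sketched above.
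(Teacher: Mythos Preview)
The statement you are attempting is a \emph{conjecture} in the paper, not a theorem: the authors explicitly write that ``at the present time the authors have not produced such a proof'' and only state the $O(2.930^n)$ bound as a conjecture supported by numerical evaluations of $\Gamma$. There is therefore no ``paper's own proof'' to compare against.

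Your proposal is likewise not a proof. It is a strategic outline---you write that the verification ``would proceed'' in four steps---and you yourself identify the obstacles (blow-up of Lipschitz constants as $\beta_{*}\to 0$, the piecewise structure of $\beta_p$ and of the $\overline{\rmm}$ envelope) without resolving them. The strategy you sketch is precisely the one the paper already notes is available in principle: since $\Gamma$ is uniformly continuous on $\Lambda_c$, any strict upper bound on $\gamma$ can be certified by a sufficiently fine finite computation. So your proposal adds no new idea beyond what the paragraph preceding the conjecture already says; it simply expands on the mechanics of such a verification without carrying it out.

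One technical slip: you write ``the convention $\rmm(k) = k$ for $k \leq 1$.'' This is false; the output matrix has $N^2$ entries, so $\rmm(k) \geq 2$ for every $k \geq 0$. For small $k$ one has $\rmm(k) = 2$ (up to the dual exponent), not $\rmm(k) = k$. This would need to be fixed in any actual implementation of your envelope $\overline{\rmm}$, though it is unlikely to affect the region where the outer maximum is attained.
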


\section{Fast weighted counting of acyclic digraphs: proof of Theorem~2}

Let us write the inclusion--exclusion recurrence \eqref{eq:aV} as a multi-subset transform: 
\begin{lemma}
Without loss of generality, suppose $0 \not\in V$. 
Let $0 \in T \subseteq V\cup\{0\}$ and  
\bes
	g(T) = \sum_{S \subseteq T} \prod_{i\in T} f_i(S)\,,
\ees
where 
\bes
	f_i(S) = \left\{\begin{array}{ll}
	0 & \textrm{if $0 \not\in S$ or $|S| = |T|$};\\	
	(-1)^{|S|-1} a_{S\setminus\{0\}} & \textrm{else if $i = 0$};\\
	\sum_{D_i \subseteq S\setminus\{0\}} w_i(D_i) & \textrm{else if $i \not\in S$};\\
	1 & \textrm{otherwise}.
 	\end{array}\right.
\ees
Then $a_{T\setminus\{0\}} = (-1)^{|T|} g(T)$.
\end{lemma}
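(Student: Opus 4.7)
The plan is to unfold the piecewise definition of the $f_i$ inside $g(T)$ and match the resulting expression term-by-term with the Tian--He recurrence~\eqref{eq:aV}. Writing $W := T \setminus \{0\}$, so that $|T| = |W| + 1$, the claim is equivalent to $a_W = (-1)^{|W|+1} g(T)$.

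First I would identify which sets $S \subseteq T$ contribute nontrivially to $g(T)$. The first clause in the definition of $f_i$ forces $\prod_{i \in T} f_i(S) = 0$ whenever $0 \notin S$ or $|S| = |T|$, so the sum over $S$ reduces to those with $0 \in S$ and $S \subsetneq T$. Writing such an $S$ as $S = R \cup \{0\}$ with $R \subsetneq W$, I would split the product over $i \in T$ into three groups and read off each factor: the $i = 0$ factor gives $(-1)^{|R|} a_R$ (since $|S|-1 = |R|$ and $S \setminus \{0\} = R$); each $i \in R$ lands in the ``otherwise'' case and contributes $1$; and each $i \in T \setminus S = W \setminus R$ (automatically $\neq 0$) contributes $\sum_{D_i \subseteq R} w_i(D_i)$. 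This yields
$$g(T) \;=\; \sum_{R \subsetneq W} (-1)^{|R|}\, a_R \prod_{i \in W \setminus R} \sum_{D_i \subseteq R} w_i(D_i).$$

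Next I would apply the change of variables $R := W \setminus S'$ to the recurrence~\eqref{eq:aV} instantiated at $V = W$. This bijects $\emptyset \neq S' \subseteq W$ with $R \subsetneq W$, converts the product/sum $\prod_{i \in S'} \sum_{D_i \subseteq W \setminus S'}$ into $\prod_{i \in W \setminus R} \sum_{D_i \subseteq R}$, replaces $a_{W \setminus S'}$ by $a_R$, and turns the sign $(-1)^{|S'|-1}$ into $(-1)^{|W|-|R|-1} = (-1)^{|W|+1}(-1)^{|R|}$. Factoring out the global $(-1)^{|W|+1}$, the remaining sum matches the expression for $g(T)$ above, yielding $a_W = (-1)^{|W|+1} g(T) = (-1)^{|T|} g(T)$.

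There is no real obstacle: the argument is a bookkeeping exercise, and the only point requiring care is tracking signs through the substitution and verifying that $T \setminus S = W \setminus R$ under $S = R \cup \{0\}$, so that the index set of the product agrees with the index set of the $i \in S'$ product in the recurrence.
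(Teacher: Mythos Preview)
Your proof is correct and follows essentially the same approach as the paper: both identify the surviving summands as those with $0\in S\subsetneq T$, split the product into the $i=0$ factor, the trivial $i\in S\setminus\{0\}$ factors, and the $i\in T\setminus S$ factors, and then match the result to the recurrence~\eqref{eq:aV} via the complement substitution $S'\leftrightarrow T\setminus S$. The only difference is cosmetic: you parametrize by $R=S\setminus\{0\}$ before complementing, whereas the paper keeps $S$ and substitutes $S'=T\setminus S$ directly, but the sign bookkeeping and the final identification are identical.
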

\begin{proof}
Because the summand vanishes unless $0 \in S \neq T$ and because $f_i(S) = 1$ unless $i \in \{0\}\cup (T \setminus S)$, we have
\bes
	(-1)^{|T|} g(T) &=& (-1)^{|T|} \sum_{0 \in S \subsetneq T} 
	f_0(S) \prod_{i \in T\setminus S} f_i(S)\\ 
	&=& \sum_{0 \in S \subsetneq T} 
	(-1)^{|T|+|S|-1} a_{S\setminus\{0\}} \prod_{i \in T\setminus S} 		\,\sum_{D_i \subseteq S\setminus\{0\}} w_i(D_i)\,.
\ees
Writing in terms of $T' := T\setminus\{0\}$ and $S' := T\setminus S$, and observing that $|S|$ and $-|S|$ have the same parity, 
\bes
	(-1)^{|T|} g(T) \;=\; \sum_{\emptyset \neq S' \subseteq T'} 
	(-1)^{|S'|-1} a_{T'\setminus S'} \prod_{i \in S'} 		\,\sum_{D_i \subseteq T'\setminus S'} w_i(D_i)
	\;=\; a_{T'}\,.
\ees
The last equality follows immediately from \eqref{eq:aV}.
\end{proof}

It remains to organize the computations so that when computing $a_T$ for some $T \subseteq V$, the values $a_S$ have already been computed for all $S \subsetneq T$. To this end, we proceed in increasing order by $|T|$: for each $t = 1, 2, \ldots, n$ in this order we simultanously compute the values $a_T$ for all $T \in {V \choose t}$ by calling the fast multi-subset transform, as detailed in algorithm $\proc{Sum-Acyclic-Digraphs}$ given in Figure~\ref{fig:sum}. As we only need $n$ calls, the asymptotic complexity bound (with a rounded constant base of the exponential) remains valid. 

\begin{figure}[t!]
\begin{center}
 \begin{codebox}
 \Procname{{\bf Algorithm} $\proc{Sum-Acyclic-Digraphs}\big((w_i)_{i\in V}\big)$}
 \li $a[\emptyset] \gets 1$; $a[S] \gets 0$ for all $\emptyset \neq S \subseteq V$
 \li compute $f_i[S\cup\{0\}] \gets \sum_{X \subseteq S} w_i(X)$ for all $i \in V$, $S \subseteq V$ using fast zeta transform
 \li $f_i[S] \gets 0$ for all $i \in V$, $S \subseteq V$.
 \li $f_i[S\cup\{0\}] \gets 1$ for all $i \in S \subseteq V$.
 \li \For $t \gets 1 \To n$ 
 \li \Do \For $S \in {V \choose t-1}$ 
 \li \Do $f_0[S\cup\{0\}] \gets (-1)^{|S|-1} a[S]$ \End
 \li $g \gets \proc{Fast-Multi-Subset-Transform}\big((f_i)_{i\in V\cup\{0\}}\big)$
 \li \For $T \in {V \choose t}$
 \li \Do $a[T] \gets (-1)^{|T|+1} g[T\cup\{0\}]$\End\End
 \li \Return $a[V]$
 \end{codebox}
\end{center}
\caption{The $\proc{Sum-Acyclic-Digraphs}$ algorithm for the sum over acyclic digraphs with modular weights. $\proc{Fast-Multi-Subset-Transform}\big((f_i)_{i\in U}\big)$ returns the multi-subset transform of $(f_i)_{i\in U}$. \label{fig:sum}}
\end{figure}

\bibliography{fastmmstransform}

\end{document}